\newtheorem{theorem}{Theorem}[section]
\newtheorem{lemma}[theorem]{Lemma}
\newtheorem{corollary}[theorem]{Corollary}
\newcommand{\marginlabel}[1]%
{\mbox{}\marginpar{\it{\raggedleft\hspace{0pt}#1}}}
\newlength{\pgmtab}  
\newcommand{\x}{\mbox{\boldmath $x$}}
\def\p{\mbox{\boldmath $p$}}
\def\x{\mbox{\boldmath $x$}}
\def\uij{u_{ij}}
\newcommand{\eat}[1]{}
\newcommand{\xij}{ x_{ij}}
\renewcommand{\Pi}{\mathcal{P}}
\newlength{\lpskip}
\newcommand{\market}{\mathcal{M}}
\newcommand{\cij}{c_{ij}}
\begin{document}
\title{Market Equilibrium with Transaction Costs}

\author{Sourav Chakraborty\thanks{Technion.  \texttt{sourav@cs.technion.ac.il}} \and  Nikhil R Devanur\thanks{Microsoft Research. \texttt{nikdev@microsoft.com}} \and Chinmay Karande\thanks{Georgia Institute of Technology. \texttt{ckarande@cc.gatech.edu}. Part of the work done when
the author was at Microsoft Research, Redmond.}}
\date{}

\maketitle

\begin{abstract}
Identical products being sold at different prices in different
locations is a common phenomenon. Price differences might occur due to various reasons such as shipping costs, trade restrictions and price discrimination. To model such scenarios, we supplement the classical Fisher model of a market by introducing {\em transaction costs}. For every buyer $i$ and every good $j$, there is a transaction cost of $\cij$; if the price of good $j$ is $p_j$, then the cost to the buyer $i$ {\em per unit} of $j$ is $p_j + \cij$. This allows the same good to be sold at different (effective) prices to different buyers.

We provide a combinatorial algorithm that computes $\epsilon$-approximate equilibrium prices and allocations in
$O\left(\frac{1}{\epsilon}(n+\log{m})mn\log(B/\epsilon)\right)$
operations - where $m$ is the number goods, $n$ is the number of buyers and $B$ is the sum of the budgets of all the buyers. 
\end{abstract}

\section{Introduction}

Identical products being sold at different prices in different locations is a common phenomenon.
Price differences might occur due to different reasons such as
\begin{itemize}
\item Shipping costs. Oranges produced in Florida are cheaper in Florida
than they are in Alaska, for example.
\item Trade restrictions. A seller with access to a wider market might sustain a higher
price than one that does not. Kakade \textit{et al} \cite{kakade-econ}
considered a model called graphical economies to study how price differences occur
due to trade agreements in international trade. Note that this is different from shipping costs since two countries are either allowed to trade (in which case they pay the same price) or not.
\item Price discrimination. A good might be priced differently for
different people based on their respective ability to pay. For example,
conference registration fees are typically lower for students than for professors.
\end{itemize}

In order to capture all these scenarios,
we supplement the classical Fisher model of a market
(see below for a formal definition) by introducing {\em transaction costs}.
For every buyer $i$ and every good $j$, there is a transaction cost of $\cij$;
if the price of good $j$ is $p_j$, then the cost to the buyer $i$ {\em per unit} of $j$ is $p_j + \cij$.
This allows the same good to be sold at different (effective) prices to different buyers.
Note that apart from non-negativity, the transaction costs are not restricted in any way and in particular, do not
have to satisfy the triangle inequality.

The scenarios mentioned earlier can all be modeled as follows:
Shipping costs are most naturally modeled as transaction costs.
We can model the trade restrictions using costs that are either  0 or $\infty$.
Our model can also be used to incorporate price discrimination into the equilibrium
computation by using the transaction costs to  specify buyer specific reserve prices.

\noindent \textbf{Alternative Models - Exogenous \textit{Vs} Endogenous Price Differentiation}: In the section on related work, we will discuss other models of price differentiation that choose an approach similar to ours, \textit{i.e.} to augment the market model with extra costs specified for the buyer. A fundamentally different alternative that does not involve such costs, especially when the price difference is due to shipping costs, is to consider `shipping' or `transaction' itself as another good in the market. This special good has a given capacity and the equilibrium price also determines the `price' of shipping, \textit{i.e.}, the transaction costs. A buyer now derives utility if she gets both a good and the corresponding transaction good.  (Each buyer could need the two in a different ratio.)   There are two reasons to choose our approach over the one in which transaction costs are determined by the market: (a) In many cases the transaction costs are {\em exogenously} specified, and one cannot choose the above model.  (b) Even if one does have a choice, our results indicate that there are faster algorithms to compute the equilibrium in the model with given transaction costs than the one with given capacities.  Thus computational considerations might induce one choose the model with given transaction costs over the other one. 

An important offshoot of the algorithmic study is the definition of new economic models, for example, the spending constraint model \cite{Vaz-adwords} was motivated mainly by computational considerations, but has been found to have interesting economic properties as well. Computational considerations was one of the main motivations that led to the definition of our model as well.

\subsection*{Fisher's Market Model with Transaction Costs} In Fisher's model, a market $\market$ has $n$ buyers and $m$ divisible goods. Every buyer $i$ has budget $B_i$. We consider \emph{linear} utility functions, \textit{i.e.}, the utility of a buyer $i$ on obtaining a bundle of goods $\x_i = (x_{i1},x_{i2}, \ldots)$ is $\sum_j \uij \xij$ where $\uij$ are given constants. Each good has an available supply of one unit (which is without loss of generality). In addition to its price, a buyer also pays a transaction cost $\cij$ per unit of good $j$. The allocation bundle for buyer $i$ is a vector $\x_i$ such that $x_{ij}$ denotes the amount of good $j$ allocated to buyer $i$.
A price vector $\p$ is an equilibrium of $\market$ if there exists allocations $\x_i$  such that
\begin{itemize}
\item  $\x_i$ maximizes the utility of $i$ among all bundles that satisfy the budget constraint, i.e.
$$\x_i \in \arg \max_{\mbox{\boldmath $y$}_i} \left\{ \sum_j \uij y_{ij} : \sum_j (p_j+\cij) y_{ij} \leq B_i \right\},$$
\item and every good is fully allocated or is priced at zero, i.e. $\forall j$, either $\sum_i \xij = 1$ or $p_j =0$.
\end{itemize}

\noindent\textbf{Remark}: A more general model is that of Arrow-Debreu, in which the endowments of buyers are goods, and the budget constraint is defined by the income obtained by the buyer by selling his goods at the given prices. With the introduction of transaction costs, the money is not conserved. For this reason transaction costs are not so natural for the Arrow-Debreu model.

\noindent\textbf{Characterization of Market Equilibrium}: We now characterize the equilibrium prices and allocations in our model. In our model, the ratio $u_{ij}/(p_j + c_{ij})$ denotes the amount of utility gained by buyer $i$ through one dollar spent on good $j$. At given prices, a bundle of goods that maximizes the total utility of a buyer contains only goods that maximize this ratio. Let $\alpha_i = \max_j{u_{ij}/(p_j+c_{ij})}$ be the bang-per-buck of buyer $i$ at given prices. We will call the set $$D_i = \{\, j \,|\, u_{ij} = \alpha_i(p_j+c_{ij})\,\}$$ the demand set of buyer $i$. Hence, $x_{ij} > 0\ \Rightarrow\ j \in D_i$. The conditions characterizing these equilibrium prices and allocations appear in table A below.

An $\epsilon$-approximate market equilibrium is characterized by relaxing the market clearing condition (Equation \eqref{eq1}) and optimal allocation condition (Equation \eqref{eq2}). Refer to equations \eqref{eq3} and \eqref{eq4} in table B.\\

\begin{minipage}[h]{0.45\textwidth}
\center{A: Market Equilibrium}
\begin{small}
\begin{eqnarray}
\label{eq7} \forall i & & \sum_{j}{(p_j + c_{ij})x_{ij}} = B_i\\
\label{eq8} \forall j & & \sum_{i}{x_{ij}} \leq 1\\
\label{eq1}\forall j & & p_j > 0\ \Rightarrow\ \sum_{i}{x_{ij}} = 1\\
\label{eq2}\forall i, j & & x_{ij} > 0\ \Rightarrow\ u_{ij} = \alpha_i(p_j+c_{ij})
\end{eqnarray}
\end{small}
\end{minipage}
\begin{minipage}[h]{0.05\textwidth}
\tiny{.}
\end{minipage}
\begin{minipage}[h]{0.50\textwidth}
\center{B: $\epsilon$-Approximate Market Equilibrium}
\begin{small}
\begin{eqnarray}
\label{eq5}& & \sum_{j}{(p_j + c_{ij})x_{ij}} = B_i\\
\label{eq6}& & \sum_{i}{x_{ij}} \leq 1\\
\label{eq3}& & p_j > \epsilon\ \Rightarrow\ \sum_{i}{x_{ij}} \geq 1/(1+\epsilon)\\
\label{eq4}& & x_{ij} > 0\ \Rightarrow\ u_{ij} \geq \alpha_i(p_j+c_{ij})/(1+\epsilon)
\end{eqnarray}
\end{small}
\end{minipage}

The relaxation of exact equilibrium conditions can be achieved in other ways. For example, \cite{GK04} use a definition of $\epsilon$-approximate market equilibrium that relaxes the budget constraints. Our algorithm can be easily adapted to this definition by simple modifications to the termination conditions.\\

\subsection*{Our Results and Methods}
We consider  a Fisher market with {\em linear utilities} and arbitrary transaction costs:  Buyer $i$ incurs a transaction cost of $c_{ij}$ per unit of good $j$.
Our main result is a combinatorial algorithm that computes $\epsilon$-approximate equilibrium prices and allocations in
$O\left(\frac{1}{\epsilon}(n+\log{m})mn\log(B/\epsilon)\right)$
operations - where $m$ is the number goods, $n$ is the number of buyers and $B$ is the sum of the budgets of all the buyers.
This algorithm is a generalization of the auction algorithm of Garg and Kapoor \cite{GK04} to our model with the transaction costs. This generalization is not straight forward; the presence of transaction costs introduces new challenges. We now outline some of these difficulties and our approach to solving them.

\vspace{1mm}

\noindent - Even the existence of an equilibrium in our model does not follow directly from any classical results. One also notices other differences right away: equilibrium prices might be irrational numbers, in contrast to the traditional model where they are guaranteed to be rational. An example where the equilibrium prices are irrational is presented in Appendix \ref{app1}.

\vspace{1mm}

\noindent - The term `auction algorithm' is used to describe ascending price algorithms (such as the one in \cite{GK04}) which maintain a feasible allocation at all times. The algorithm makes progress by revoking a portion of goods currently assigned to a buyer and reallocating it to another buyer offering a higher price. An easy monotonicity property that all variants of the auction algorithm use crucially is
the fact that the total surplus (unspent money of the buyers) decreases throughout the algorithm.
However one cannot have such monotonically decreasing surplus in the presence of transaction costs.
This is because even though the prices are increasing, a good may be reallocated to a buyer with a
lower transaction cost and thus the total money spent by the buyers decreases.
We get around this difficulty by analyzing the running time in a way that does not rely on this property.

\vspace{1mm}

\noindent - Another important property that holds in the traditional model is that an increase in the price of a good has the same effect on
the {\em bang-per-buck} of that good for any buyer. Because of the transaction costs, this property is no longer true and hence, a key lemma in the analysis of the auction algorithm of \cite{GK04},
that a certain directed graph is acyclic, does not hold. We provide a counter-example in Appendix \ref{app3}. We also show that due to this, a fully distributed version of the auction algorithm and in particular the algorithm from \cite{GK04} does not even converge. We get around this by designing a way to reallocate the goods in a cycle that guarantees progress.
This process also leads to an increase in the running time if analyzed in the naive way, and the eventual running time is obtained by a more careful, amortized analysis. In spite of these difficulties, we match the running time of the auction algorithm in \cite{GK04}.

\noindent - Our method of reallocating goods is similar in spirit to the path auctions used by \cite{GK06}. Again however, the properties of monotonic decrease in surplus and acyclicness of the demand graph hold in their case, whereas these properties cease to exist when transaction costs are introduced.
\subsection*{Related Work}

\label{related}

The computation of economic and game theoretic equilibria has been an active area of research over the past decade. Hardness results \cite{DGP-3nash,DGP06,CD-nash,cod-ppad,CDDT09} and algorithmic results \cite{DPS,DPSV08,Jain,GK04,Vaz-adwords,DV-sc,CPV05,cod-taton,JV06,ye-linear,DK08} have been delineating the boundary between what is efficiently computable and what is not.
Recently, there has been a lot of interest in analyzing the convergence of local, distributed processes \cite{WZ07,CF08}.

Convex programming has been one of the main tools in designing algorithms for market equilibrium.
A simple modification of the convex program introduced by \cite{jgarg08,Dev09} captures the equilibria of our problem as its optimal solution. (This is presented in Appendix \ref{convex}.) This proves {\em existence and uniqueness} of equilibria.
It also implies that the ellipsoid algorithm can be used to get a polynomial time algorithm
to compute the equilibrium\footnote{Since the equilibrium could be irrational, the ellipsoid algorithm
would compute an equilibrium with precision $\delta$ in time proportional to $\log(1/\delta)$.}.  
The auction algorithm is combinatorial, runs faster
and provides a simple alternative that can be implemented efficiently in practice. 
Also, the auction algorithm is more amenable to heuristical optimizations and modifications, such as to improve the running time on a specific instance class, or to handle situations in which one already has an equilibrium, an additional buyer is introduced into the market and we need to compute the new equilibrium.
 It is not clear if one can construct an interior point algorithm to solve the convex program. 
 Ye \cite{ye-linear} gave one such algorithm for the Eisenberg-Gale convex program.

Devanur et al. \cite{DPSV08} gave a combinatorial algorithm based on the primal-dual schema to compute an exact equilibrium in the traditional model. We can generalize this algorithm to incorporate transaction costs, but the best running time we can prove is exponential. We defer the description of this algorithm to the full version of the paper.
A strongly polynomial time algorithm for the Fisher linear market was given by Orlin \cite{Orlin10}; 
it does not seem like his ideas can be adapted directly to our setting.

Chen, Ghosh and Vassilvitskii \cite{CGV08} study a model similar to ours,
in the setting of profit-maximizing envy-free pricing (for a single commodity but at different locations)
and show that adding transaction costs that form a metric makes the algorithmic problem {\em easier}.
In contrast, our model is clearly a generalization of the traditional Fisher model, and so the
problem is only {\em harder}.

The transaction costs in our model are independent of the prices.
An alternate model is to let the  transaction cost be
a fixed fraction of the price. Such costs can be interpreted as {\em taxes}
and have been studied by Codenotti et al. \cite{CRV06}.
Taxes could be uniform, that is, depend only on the good, or non-uniform,
that is, depend on the good and the buyer. In the Fisher's model, all our results can be extended with minimum  modifications to the setting where both per-dollar taxes and per-unit transaction costs are present in the market.

\subsection*{Extensions and Open Problems}

All of our results can be easily extended to quasi-linear utilities, that is,
the buyers have utility for money as well, which is normalized to 1.
So the utility of the bundle $\x_i$ is $\sum_j (\uij - p_j) \xij $.
Extending the results to other common utility functions is an  open problem.
In particular, Garg, Kapoor and Vazirani \cite{GKV} extend the auction algorithm to 
separable weak gross substitute utilities. The potential function they use is the 
total surplus, and we don't know a combinatorial bound on the number of events in their algorithm. 
As mentioned earlier, this potential function cannot be used in the presence of transaction costs, 
and therein lies the difficulty in extending our results to this case. 

The auction algorithm for the traditional models can be made to be
distributed and even asynchronous, with a small increase in the running time.
We show that a similar distributed/asynchronous version of the algorithm
may not converge in the presence of transaction costs.
An interesting open question is if there is some other asynchronous/distributed algorithm
that also converges fast. In particular, is there a tattonnement process that converges fast
(like in \cite{CF08})?\\

\noindent\textbf{Outline}: The rest of this paper is structured as follows:  We provide an overview, followed by the details of our algorithm in Section \ref{section:algo}. Section \ref{sec:analysis} has the proof of the correctness of the algorithm and the bound on its running time.

\section{Algorithm}
\label{section:algo}

\begin{theorem}
\label{thm.main}
There exists an algorithm that finds $\epsilon$-approximate equilibrium prices and allocations in\\ $O\left(\frac{1}{\epsilon}(n+\log{m})mn\log(B/\epsilon)\right)$ operations where $B = (1+\epsilon)\sum_{i}B_i$.
\end{theorem}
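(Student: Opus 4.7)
The plan is to prove Theorem~\ref{thm.main} by generalizing the auction algorithm of Garg and Kapoor \cite{GK04} to the transaction-cost setting. I would initialize $p_j = \epsilon$ and $x_{ij} = 0$ for all $i,j$, and maintain a residual surplus $r_i := B_i - \sum_j (p_j + c_{ij}) x_{ij}$ for every buyer. Each iteration picks an undersatisfied buyer $i$ with $r_i > \epsilon B_i$; recomputes the bang-per-buck $\alpha_i = \max_j u_{ij}/(p_j + c_{ij})$ and the demand set $D_i$; and then either (a) assigns some unallocated capacity of a good $j \in D_i$ to $i$, (b) ``outbids'' another buyer $k$ who currently holds $j$ by transferring a small quantity of $j$ from $k$ to $i$, provided $k$ has an alternative in $D_k$, or (c) if neither of the above is possible, multiplicatively raises the price $p_j \leftarrow (1+\epsilon) p_j$ for some $j \in D_i$. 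The loop terminates once $r_i \leq \epsilon B_i$ for every $i$.

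The key new ingredient, absent in \cite{GK04}, addresses the fact that the outbid graph on buyers may now contain directed cycles (as shown in the authors' Appendix~\ref{app3}). To make progress in such a configuration, I would detect a directed cycle in the outbid graph and simultaneously rotate allocations around it by a quantum chosen as the minimum feasible shift. A short calculation verifies that after the rotation the targeted buyer's expenditure strictly increases, so progress is made without relying on monotonicity of the global surplus. This is the cycle-aware replacement for the DFS-style reallocation used in the transaction-cost-free setting.

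For correctness, the algorithm preserves the invariants $r_i \geq 0$, $\sum_i x_{ij} \leq 1$, and the approximate optimality $u_{ij}/(p_j + c_{ij}) \geq \alpha_i/(1+\epsilon)$ whenever $x_{ij} > 0$. The last invariant holds initially (allocation occurs only to goods in $D_i$) and is preserved under price raises: when $p_j$ is multiplied by $1+\epsilon$, the ratio $u_{ij}/(p_j + c_{ij})$ drops by at most a factor $1+\epsilon$, while $\alpha_i$ only decreases. These invariants yield conditions \eqref{eq4}, \eqref{eq5}, \eqref{eq6}. For \eqref{eq3}, the point is that $p_j > \epsilon$ forces at least one prior price raise of $j$, and raises occur only when $j$ is fully sold; one then shows that the total allocation of $j$ can fall no further than $1/(1+\epsilon)$ before another raise would have been triggered.

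The running-time bound is where the main obstacle lies, and I would obtain it through an amortized analysis. First, every price stays in $[\epsilon, B]$, so the total number of price-raise events is $O\bigl(\frac{m}{\epsilon}\log(B/\epsilon)\bigr)$. Bounding the number of rebalancing events between price raises is the delicate step: the natural potential used in \cite{GK04}, the total surplus, is no longer monotone here because shifting a good from a high-$c_{ij}$ buyer to a low-$c_{ij}$ one can \emph{increase} surplus. My plan is to charge every outbid and every cycle rotation to a future price raise of one of the involved goods, exploiting the fact that without an intervening raise a good can participate in only $O(n)$ rebalancings before it becomes ``stuck'' in a saturated demand configuration. This yields $O\bigl(\frac{mn}{\epsilon}\log(B/\epsilon)\bigr)$ rebalancing events in total. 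Finally, implementing each event with a heap on $u_{ij}/(p_j + c_{ij})$ for demand-set maintenance and a linear scan for outbid paths and cycles gives a per-event cost of $O(n + \log m)$, producing the claimed overall bound.
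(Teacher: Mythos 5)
Your plan captures the paper's high-level structure---a two-tier auction algorithm with multiplicative price raises, a demand graph on buyers, and cycle-aware reallocation---but the central new claim in your cycle step is wrong. You assert that after rotating allocations around a cycle ``the targeted buyer's expenditure strictly increases.'' Take a cycle $v_0, v_1, \ldots, v_l = v_0$ where only $v_0$ has positive surplus, and choose the rotation quanta $\delta_q$ so that every intermediate node's surplus stays exactly zero (the only way to avoid creating new surplus mid-cycle). Then the net change in $r_{v_0}$ is $C\delta_0$ where
\[
C \;=\; \left(\tfrac{p_{\pi(v_{l-1})}}{1+\epsilon} + c_{v_0\pi(v_{l-1})}\right)\tfrac{\delta_{l-1}}{\delta_0} \;-\; \bigl(p_{\pi(v_0)} + c_{v_0\pi(v_0)}\bigr),
\]
and nothing in the model pins down the sign of $C$: the ratio $\delta_{l-1}/\delta_0$ telescopes into a product of transaction-cost-dependent factors that can exceed $1$. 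When $C \geq 0$ the rotation \emph{increases} $v_0$'s surplus (decreases expenditure), directly contradicting your stated progress measure. The paper handles this explicitly by splitting into cases: when $C < 0$ the rotation drives $r_{v_0}$ to zero, and when $C \geq 0$ progress comes solely from some allocation $y_{v_{q+1}\pi(v_q)}$ being driven to zero so that an edge drops out of $G$. Without the second branch your termination argument does not close.

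The amortized bound on rebalancings also needs sharper bookkeeping than ``charge each outbid to a future price raise.'' The paper's Lemma~\ref{lem:edgedrop} charges \emph{backward}: it defines a bipartite graph $H$ on buyers and goods with an edge $(j,i)$ whenever $y_{ij} > 0$, observes that each price raise of good $j$ creates at most $n$ such edges (the higher-tier allocations of $j$ that are demoted to the lower tier), and therefore at most $nR$ of these edges can ever be deleted over the run of the algorithm; each transfer walk ending in an edge-drop consumes one. Your claim that a good ``can participate in only $O(n)$ rebalancings before it becomes stuck'' is the intuition this makes precise, but as stated it is not a proof and the forward-charging direction is hard to certify. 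Finally, note that your termination test $r_i \leq \epsilon B_i$ yields the Garg--Kapoor relaxation of the budget constraint, whereas the paper's two-tier bookkeeping lets it run surpluses to exactly zero and satisfy equation~\eqref{eq5} with equality; this is a legitimate alternative but changes which approximate-equilibrium conditions you end up proving.
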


\subsubsection*{Overview}

Our algorithm maintains a set of prices and allocations and modifies them progressively. To initialize, we set all the prices $p_j = \epsilon$ and all the allocations are empty. The algorithm is organized in rounds. At the end of each round, we raise the price of one good by a multiplicative factor of $1+\epsilon$. Any allocations made before the price raise continue to be charged at the earlier, lower price. Therefore at any point in the algorithm, a good may be allocated to buyers at two different prices, $p_j$ and $p_j/(1+\epsilon)$.
During a round, we take a good away from a buyer at the lower price and allocate it to a buyer (possibly the same buyer) at the current, higher price. We find a sequence of such reallocations such that we find a buyer with positive surplus and a good in her demand set such that all of that good is allocated at the current price. When we find such a buyer-good pair, we increase the price of that good and end the round. The algorithm terminates when the budgets of all the buyers are exhausted.

Following invariants are maintained throughout the algorithm:

  \indent\indent I1: Buyers have non-negative surplus i.e. no buyer exceeds her budget.

    \indent\indent I2: All prices are at least $\epsilon$.

    \indent\indent I3: Every good is either priced $\epsilon$ or is fully allocated.

    \indent\indent I4: Any good $j$ allocated to a buyer $i$ must be approximately most desirable. (As in Equation \eqref{eq4})

    \indent\indent I5: A good $j$ is allocated at price either $p_j$ or $p_j/(1+\epsilon)$ where $p_j$ is the current price.

Invariant I3 is a tighter version of equation \eqref{eq3}. We maintain I3 and I5 until the end of the algorithm whence we merge the two price tiers. This may lead to some goods being undersold, but we prove that equation \eqref{eq3} still holds. Also note that invariant I4 holds for any allocations, whether at the higher or lower price tier. Unless mentioned otherwise, the statements of all the lemmas that follow are constrained to maintain these invariants.

We now present the details of our algorithm. Each round consists of roughly two parts: 1) We construct a \emph{demand graph} $G$ on the set of buyers and 2) We perform multiple iterations of a reallocation procedure  - which we call a \emph{transfer walk}. At the end of each round, we increment the price of some good. The sequence of rounds ends when the surplus of all the buyers reduces to zero. At the end, we readjust the allocations to merge the two price tiers. In what follows, we explain our algorithm in three parts: a) Construction and properties of the demand graph, b) Transfer walks and c) Readjustment of allocations.

\textbf{Notation}: We denote the allocations of good $j$ to buyer $i$ at prices $p_j$ and $p_j/(1+\epsilon)$ as $h_{ij}$ and $y_{ij}$ respectively.
We denote by $z_j = 1 - \sum_{i}{(h_{ij} + y_{ij})}$ the amount of good $j$ unassigned at any point in the algorithm. Given any prices and allocations, the surplus $r_i$ of buyer $i$ is the part of her budget unspent: $$r_i\ =\ B_i\ -\ \sum_{j}{(p_j+c_{ij})h_{ij}}\ -\ \sum_j{\left(\frac{p_j}{1+\epsilon}+c_{ij}\right)y_{ij}}$$

Notice that since the prices remain constant throughout a round except at the end, the demand sets of all the buyers are well defined. In each round we fix a function $\pi(i) = \min\{\, j\, |\, j \in D_i\,\}$. Intuitively, we will attempt to allocate the good $\pi(i)$ to $i$ in this round, ignoring all the other goods in $D_i$ for the moment. Any choice of a good from $D_i$ suits as $\pi(i)$, but we fix a function for ease of exposition.

\vspace{7mm}

\noindent\textbf{Construction and properties of the demand graph}\\

We then construct a directed graph $G$ on the set of buyers. An edge exists from buyer $i$ to $k$ if and only if  $y_{k\pi(i)} > 0$. A node $i$ in this graph with (1) no out-edges (\textit{i.e.} a sink), (2) $r_i > 0$ and (3) $z_{\pi(i)} = 0$ will be defined to be `unsatisfiable'.

\begin{lemma}\label{lem.priceincrease}
For an unsatisfiable node $i$, the price of the good $\pi(i)$ can be increased by a multiplicative factor of $1+\epsilon$.
\end{lemma}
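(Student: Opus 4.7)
My plan is to interpret the lemma as the claim that we can perform the update $p_{\pi(i)} \leftarrow (1+\epsilon) p_{\pi(i)}$, together with a suitable relabelling of the existing allocations of $\pi(i)$, while preserving all five of the algorithm's invariants I1--I5. The concrete update I would propose is: let $p$ denote the old price of $\pi(i)$, set $y_{k\pi(i)} \leftarrow h_{k\pi(i)}$ and $h_{k\pi(i)} \leftarrow 0$ for every buyer $k$, and replace $p_{\pi(i)}$ by $(1+\epsilon)p$. What used to be a top-tier allocation at price $p$ is now a bottom-tier allocation at the new lower tier $(1+\epsilon)p/(1+\epsilon) = p$, so no buyer's out-of-pocket expense changes and each surplus $r_k$ is preserved by the update.

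After setting up this update, I would verify the invariants in order. I1 is immediate since surpluses are unchanged, I2 follows because prices only go up, and I3 holds for $\pi(i)$ precisely because $z_{\pi(i)} = 0$ (so the good remains fully allocated) and for every other good because nothing about it changes. I5 is where the sink hypothesis is essential: had any buyer $k$ previously held a bottom-tier allocation $y_{k\pi(i)} > 0$, that allocation would now sit at the old tier price $p/(1+\epsilon)$, which equals neither of the two admissible new tiers $(1+\epsilon)p$ and $p$, violating I5. The assumption that $i$ has no out-edges in $G$ rules exactly this out.

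The most delicate step is verifying I4. Raising $p_{\pi(i)}$ strictly lowers the per-dollar utility $u_{k\pi(i)}/(p_{\pi(i)}+c_{k\pi(i)})$ of $\pi(i)$ for every buyer $k$ while leaving every other ratio $u_{kj}/(p_j+c_{kj})$ unchanged, so the bang-per-buck values satisfy $\alpha_k^{\mathrm{new}} \leq \alpha_k^{\mathrm{old}}$ uniformly in $k$. I would then show that for every existing allocation $x_{kj} > 0$, the new I4 inequality follows from the old one: when $j \neq \pi(i)$, both $p_j$ and $c_{kj}$ are unchanged and the conclusion is immediate from $\alpha_k^{\mathrm{new}} \leq \alpha_k^{\mathrm{old}}$; for the relabelled bottom-tier allocations of $\pi(i)$, the tier price $p$ at which the allocation was made has not changed, so the old inequality $u_{k\pi(i)} \geq \alpha_k^{\mathrm{old}}(p+c_{k\pi(i)})/(1+\epsilon)$ again implies the new one after substituting the smaller $\alpha_k^{\mathrm{new}}$.

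The main obstacle, as I see it, is pinning down precisely which price enters the I4 condition for a $y$-allocation once the tiers shift. I would want to make explicit that the invariant tracks the per-unit price at which each allocation was actually made, and that the tier relabelling merely renames a top-tier allocation at physical price $p$ as a bottom-tier allocation at the same physical price $p$. Once this bookkeeping is clear, the entire argument reduces to the monotonicity $\alpha_k^{\mathrm{new}} \leq \alpha_k^{\mathrm{old}}$ plus the sink condition $y_{k\pi(i)} = 0$, both of which the hypotheses of the lemma give for free.
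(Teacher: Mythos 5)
Your overall plan is right on several points that the paper also uses: the tier shift $h_{k\pi(i)} \to y_{k\pi(i)}$, the observation that the sink condition $y_{k\pi(i)} = 0$ is exactly what makes the relabelling well-defined, and the monotonicity of the $\alpha_k$ values. But the verification of I4 for the relabelled allocations of $\pi(i)$ contains a genuine gap, and the way you propose to ``fix the bookkeeping'' would actually break the downstream correctness argument.

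The issue is which price enters I4. Invariant I4 must be read with the \emph{current} price $p_j$, not the tier price at which the allocation was charged: Theorem~\ref{thm:correct} needs $u_{ij} \geq \alpha_i(p_j+c_{ij})/(1+\epsilon)$ for every merged $x_{ij}>0$, and $x_{ij}>0$ whenever $y_{ij}>0$, so a tier-price version of I4 for $y$-allocations would only yield $u_{ij} \geq \alpha_i\bigl(p_j/(1+\epsilon)+c_{ij}\bigr)/(1+\epsilon)$, which is strictly weaker than Equation~\eqref{eq4}. So after the raise you really must establish $u_{k\pi(i)} \geq \alpha_k^{\mathrm{new}}\bigl(p'+c_{k\pi(i)}\bigr)/(1+\epsilon)$ with $p'=(1+\epsilon)p$. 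Starting from only the inherited approximate bound $u_{k\pi(i)} \geq \alpha_k^{\mathrm{old}}(p+c_{k\pi(i)})/(1+\epsilon)$, this does not follow: even with $\alpha_k^{\mathrm{new}}=\alpha_k^{\mathrm{old}}$, the target right-hand side $\alpha_k(p+c_{k\pi(i)}/(1+\epsilon))$ exceeds the hypothesized bound $\alpha_k(p+c_{k\pi(i)})/(1+\epsilon)$, so the inequality can fail.

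What the paper's proof supplies, and your argument omits, is that top-tier allocations of $j=\pi(i)$ satisfy the demand condition \emph{exactly}, $u_{kj} = \alpha_k(p_j+c_{kj})$, not merely up to the $(1+\epsilon)$ slack. The reason is a separate monotonicity argument: at the moment the allocation $h_{kj}$ was last made, $j\in D_k$ exactly; since then $p_j$ and $c_{kj}$ have not changed (otherwise the allocation would have been shifted to the lower tier) while $\alpha_k$ can only have decreased, so $\alpha_k$ must still equal $u_{kj}/(p_j+c_{kj})$. This exact equality is the extra factor of $(1+\epsilon)$ that absorbs the price raise: $u_{kj}=\alpha_k(p_j+c_{kj}) \geq \alpha_k'\bigl(p_j'/(1+\epsilon)+c_{kj}\bigr) \geq \alpha_k'(p_j'+c_{kj})/(1+\epsilon)$. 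Without the exact-membership lemma your I4 step does not close, so the proposal as written is incomplete.
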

\begin{proof}

Let $j = \pi(i)$. Since node $i$ is unsatisfiable, all the allocations of good $j$ are at the current prices before the price raise and they shift to the lower price tier after the price raise. This maintains invariant I5. The fact that all these allocations continue to be charged at the earlier price maintains invariant I1. Invariant I3 follows from $z_j = 0$. We now need to verify that invariant I4 is not violated.

Let $k$ be any buyer such that $h_{kj} > 0$. We claim that $j \in D_k$ before the price raise. For contradiction, assume otherwise. Then $u_{kj} < \alpha_k(p_j+c_{kj})$. Now consider the last instance in the algorithm when any allocation of good $j$ was made to buyer $k$. If $\alpha'_k$ was the bang-per-buck of $k$ at that instance, then $u_{kj} = \alpha'_k(p_j+c_{kj})$ implying $\alpha_k > \alpha'_k$. This is impossible since by definition of bang-per-buck, the $\alpha$ values decrease monotonically as the prices are raised.

Hence we have $u_{kj} = \alpha_k(p_j+c_{ij})$. Let $p'_j = (1+\epsilon)p_j$ and $\alpha'_k$ be the bang-per-buck of $k$ after the price increase. Then $$u_{kj}\ =\ \alpha_k(p_j+c_{ij})\ \geq\ \alpha'_k\cdot\left(\frac{p'_j}{1+\epsilon}+c_{ij}\right)\ \geq\ \alpha'_k\cdot\frac{(p'_j+c_{ij})}{1+\epsilon}$$ which is exactly the statement of invariant I4.

Note: The existence of buyer $i$ with $r_i > 0$ is not essential to the statement of the proof, but is rather an artefact of the algorithm. Intuitively, we only raise the price of a good if the current price leads to excess demand, as evident from buyer $i$.
\end{proof}

But the graph $G$ may not contain an unsatisfiable node to start with. Hence we perform a series of reallocations until we create and/or find such a node.

The reallocation involves the following step: For an edge $i \rightarrow k$ in $G$ with $r_i > 0$, we take away the lower price allocation of good $\pi(i)$ for $k$ and allocate it to $i$ at the current price. In short, we perform the operations $y_{k\pi(i)} \leftarrow y_{k\pi(i)} - \delta$ and $h_{i\pi(i)} \leftarrow h_{i\pi(i)} + \delta$ for a suitably chosen value of $\delta$. This process reduces $r_i$, $y_{k\pi(i)}$ and increases $r_k$. If $y_{k\pi(i)}$ reduces to zero, we drop the edge $(i,k)$ from the graph. When we make such a reallocation, we say that we
{\em transfer surplus} from $i$ to $k$.
Note that the surplus is not conserved.
This is because the price paid by $i$ for the same amount of the good,
including the transaction costs, could even be lower than the price paid by $k$.

\begin{lemma}\label{lem.transfer}
If the edge from $i$ to $k$ exists in $G$ with $r_i > 0$,
then we can transfer surplus from $i$ to $k$ such that either the
surplus of $i$ becomes zero or the edge $(i,k)$ drops out of $G$.
\end{lemma}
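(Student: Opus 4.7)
Plan: Let $j = \pi(i)$. Since the edge $(i,k)$ exists in $G$, by construction $y_{kj} > 0$, and by hypothesis $r_i > 0$. I would pick the transfer amount $\delta$ to be the largest one that (a) does not drive the surplus of $i$ negative and (b) does not make $y_{kj}$ negative. Concretely, set
$$\delta_1 \;=\; \frac{r_i}{p_j + c_{ij}}, \qquad \delta_2 \;=\; y_{kj}, \qquad \delta \;=\; \min(\delta_1,\delta_2).$$
Both $\delta_1,\delta_2 > 0$, so $\delta > 0$, and the update $y_{kj} \leftarrow y_{kj} - \delta$, $h_{ij} \leftarrow h_{ij} + \delta$ is meaningful. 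By the choice of $\delta$, one of the two constraints is tight: if $\delta = \delta_1$ then $r_i$ decreases by exactly $(p_j+c_{ij})\delta_1 = r_i$, zeroing out the surplus of $i$; if $\delta = \delta_2$ then $y_{kj}$ is driven to $0$ and, by the definition of $G$, the edge $(i,k)$ drops out.

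Next, I would verify that invariants I1--I5 survive the transfer. For I1: the surplus of $i$ changes by $-(p_j + c_{ij})\delta \geq -r_i$, so $r_i \geq 0$ afterwards; the surplus of $k$ changes by $+\bigl(p_j/(1+\epsilon) + c_{kj}\bigr)\delta \geq 0$, so $r_k$ can only grow. I2 is immediate since no prices move. For I3: the total allocation of good $j$ is conserved --- the $\delta$ units are merely relabeled from the lower tier under $k$ to the current tier under $i$ --- and no other good is touched. For I4: the only new positive allocation variable is $h_{ij}$, and since $j = \pi(i) \in D_i$ we have the \emph{exact} equality $u_{ij} = \alpha_i(p_j + c_{ij})$, which is strictly stronger than what I4 requires; the remaining positive allocations are unchanged and so retain their I4-compliance. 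For I5: the transferred mass moves from the lower tier to the current tier, so the two-tier structure is preserved.

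There is no serious conceptual obstacle; the main thing to keep track of is that although $i$'s surplus decreases by $(p_j + c_{ij})\delta$ while $k$'s surplus only increases by $\bigl(p_j/(1+\epsilon) + c_{kj}\bigr)\delta$, so total surplus is \emph{not} conserved (as emphasized in the paragraph preceding the lemma), no invariant is affected because both individual surpluses remain non-negative. The essential content of the lemma is the forced termination event, which is built into the definition $\delta = \min(\delta_1,\delta_2)$: at the end of the transfer either $r_i = 0$ or the edge $(i,k)$ has been removed, which is exactly what is claimed.
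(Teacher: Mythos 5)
Your proof is correct and takes the same approach as the paper: pick $\delta$ so that either $i$'s surplus or the lower-tier allocation $y_{kj}$ is exhausted, whichever binds first. (The paper writes $\delta = \max(\cdot,\cdot)$, which is clearly a typo for $\min$, as you have it --- with $\max$ one would overrun the budget or the available quantity.) You additionally verify invariants I1--I5, which the paper's one-line proof omits; that extra care is sound and welcome.
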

\begin{proof}
Refer to Section \ref{sec:analysis}.
\end{proof}

We can repeatedly apply lemma \ref{lem.transfer} to transfer surplus along a path in $G$.

\begin{corollary}\label{lem.path}
If there exists a path from $i$ to $k$ in $G$ and $r_i>0$,
then we can transfer surplus from $i$ to $k$ such that either the
surplus of all the nodes on the path except $k$ becomes zero or an edge in the path drops out of $G$.
\end{corollary}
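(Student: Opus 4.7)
The plan is to prove the corollary by induction on the length $\ell$ of the path $i = v_0 \to v_1 \to \cdots \to v_\ell = k$, with Lemma~\ref{lem.transfer} as the base case $\ell = 1$.

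For the inductive step, I would apply Lemma~\ref{lem.transfer} to the leading edge $(v_0, v_1)$. This gives one of two outcomes: either the edge $(v_0, v_1)$ drops out of $G$, in which case an edge of the path has dropped and we are done, or $r_{v_0}$ becomes zero. In the second case, I would then invoke the inductive hypothesis on the strictly shorter subpath $v_1 \to v_2 \to \cdots \to v_\ell$. That hypothesis either drops some remaining edge (which is still an edge of the original path) or zeroes out $r_{v_1}, \dots, r_{v_{\ell-1}}$; combined with $r_{v_0} = 0$ from the first transfer, every node on the path except $v_\ell$ then has zero surplus, which is the first alternative in the conclusion.

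To make the induction go through, I need two routine verifications. First, after the first transfer $r_{v_1}$ must be strictly positive so that Lemma~\ref{lem.transfer} applies to the edge $(v_1, v_2)$. This follows from the mechanics of the transfer: when $r_{v_0}$ drops from a positive value to zero, the transfer amount $\delta$ is necessarily strictly positive, and the transfer increases $r_{v_1}$ by exactly $\delta\bigl(p_{\pi(v_0)}/(1+\epsilon) + c_{v_1\,\pi(v_0)}\bigr) > 0$, which together with invariant I1 gives $r_{v_1} > 0$. Second, the subpath $v_1 \to \cdots \to v_\ell$ remains intact in $G$, because the transfer along $(v_0, v_1)$ only modifies the allocation $y_{v_1\,\pi(v_0)}$, a variable disjoint from the variables $y_{v_{j+1}\,\pi(v_j)}$ for $j \geq 1$ that determine the remaining edges.

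I do not anticipate a serious obstacle here: the substantive work has been absorbed into Lemma~\ref{lem.transfer}, and the corollary is essentially a relay of that lemma along the chain. The only subtle point worth flagging is the non-conservation of surplus emphasized earlier in the paper — but in this context it only helps, since each transfer pumps a strictly positive amount of surplus into the next node whenever it zeroes out the current one, so no intermediate node can run dry before being handed off to its successor on the path.
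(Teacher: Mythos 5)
Your proof is correct and takes essentially the same approach as the paper: repeatedly invoke Lemma~\ref{lem.transfer} along the edges of the path in order, stopping early if an edge drops out. The paper's version is terser and does not explicitly verify the two points you flag (that each intermediate node acquires strictly positive surplus, and that the transfer on $(v_0,v_1)$ leaves the later edges untouched), so your inductive framing is a careful elaboration rather than a different argument.
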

\begin{proof} Refer to Section \ref{sec:analysis}.
\end{proof}

Finally, $G$ may contain cycles. Consider the edges $(i_1, i_2)$ and $(i_2, i_3)$ in $G$ and let $j_1 = \pi(i_1)$ and $j_2 = \pi(i_2)$. If the transaction costs are all zero, then it can be argued that the last price raise for $j_1$ must have taken place before the last price raise for $j_2$. Telescoping this argument, one can preclude the existence of cycles in $G$ in absence of transaction costs. This acyclicity of $G$ forms a pivotal argument in the algorithm of Garg and Kapoor \cite{GK04}. In Appendix \ref{app3} we provide a sketch of how a cycle can emerge in $G$ when transaction costs are present. Moreover, we can show that the algorithm of \cite{GK04} will slow down indefinitely if $G$ contains cycles. In the above example suppose $i_1 = i_3$, so that $G$ contains a cycle of length two. For simplicity, assume $r_{i_1} > 0$ and $r_{i_2} = 0$. One round of their algorithm may perform a surplus transfer from $i_1$ to $i_2$ followed by another from $i_2$ to $i_1$. One can compute that this will change $r_{i_1}$ as $r_{i_1} \leftarrow Cr_{i_1}$ where $C$ is a constant at given prices. If $C = 1-\delta$ for a very small $\delta > 0$, it can require an infinite number of rounds for $r_{i_1}$ to reduce to zero. Broadly speaking, any distributed version of the auction algorithm, that is unaware of the graph structure will suffer from the same problem.

Therefore, we need to be able to transfer surplus around a cycle.

\begin{lemma}\label{lem.cycle}
If there exists a cycle in $G$ and exactly one node in the cycle has positive surplus, then we can transfer surpluses in such a way that either all the node in the cycle have zero surplus or an edge in the cycle drops out.
\end{lemma}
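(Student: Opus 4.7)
The plan is to perform a single, coordinated transfer along every edge of the cycle at once, with the magnitudes chosen so that each of the currently-zero-surplus nodes remains at zero, and then to push the common scale up until either the surplus at $i_1$ hits zero or some edge's underlying lower-tier allocation is exhausted.

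Label the cycle $i_1\to i_2\to\cdots\to i_\ell\to i_1$ with $r_{i_1}>0$ and $r_{i_k}=0$ for $k\ge 2$, and write $j_k=\pi(i_k)$. Let $\delta_k\ge 0$ be the quantity of good $j_k$ transferred along the edge $(i_k,i_{k+1})$: that is, decrease $y_{i_{k+1},j_k}$ by $\delta_k$ and increase $h_{i_k,j_k}$ by $\delta_k$. The net change in the surplus of $i_m$ comes from the refund on the incoming edge and the extra payment on the outgoing edge:
\[
\Delta r_{i_m}\;=\;\delta_{m-1}\!\left(\tfrac{p_{j_{m-1}}}{1+\epsilon}+c_{i_m j_{m-1}}\right)\;-\;\delta_m\bigl(p_{j_m}+c_{i_m j_m}\bigr),
\]
with the cyclic convention $\delta_0=\delta_\ell$ and $j_0=j_\ell$. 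For every intermediate node ($m=2,\ldots,\ell$) I impose $\Delta r_{i_m}=0$, which solves uniquely for $\delta_m$ as a positive multiple of $\delta_{m-1}$; iterating gives $\delta_m=\lambda_m\,\delta_1$ with
\[
\lambda_m\;=\;\prod_{k=2}^{m}\frac{p_{j_{k-1}}/(1+\epsilon)+c_{i_k j_{k-1}}}{p_{j_k}+c_{i_k j_k}}\;>\;0.
\]
Thus the entire coordinated transfer is parameterized by a single scalar $\delta_1\ge 0$, and $\Delta r_{i_1}$ is an affine function $\kappa\,\delta_1$ for a fixed constant $\kappa$.

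Now I take the largest $\delta_1\ge 0$ subject to the capacity constraints $\delta_k\le y_{i_{k+1},j_k}$ for every $k$ and, if $\kappa<0$, the additional constraint $r_{i_1}+\kappa\,\delta_1\ge 0$. If a capacity constraint binds, then the corresponding $y_{i_{k+1},j_k}$ drops to zero and the edge $(i_k,i_{k+1})$ leaves $G$, yielding the second conclusion. Otherwise the surplus constraint must bind, which forces $\kappa<0$ and drives $r_{i_1}$ exactly to zero; together with the intermediate zeros this gives the first conclusion. In the ``gainy'' case $\kappa\ge 0$, the surplus at $i_1$ does not decrease, but since all $\lambda_m>0$ and the $y$'s are finite, some capacity constraint must bind for finite $\delta_1$, so an edge still drops.

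The main obstacle is to check that all five invariants survive the transfer. Invariant I1 is precisely what the choice of $\delta_1$ enforces at $i_1$ (and holds at each $i_m$, $m\ge2$, because $\Delta r_{i_m}=0$). Invariants I2 and I3 are trivially preserved because no prices change and $z_{j_k}$ is unchanged (we only shuffle good $j_k$ between two buckets). Invariant I5 holds because the only allocations touched are $h_{i_k,j_k}$ (current tier) and $y_{i_{k+1},j_k}$ (lower tier). For I4, the only allocations being \emph{created or enlarged} are the $h_{i_k,j_k}$, and since $j_k=\pi(i_k)\in D_{i_k}$ we have $u_{i_k j_k}=\alpha_{i_k}(p_{j_k}+c_{i_k j_k})$, so I4 holds with equality there; decreasing $y_{i_{k+1},j_k}$ cannot violate I4 for any other allocation.
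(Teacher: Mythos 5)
Your proof is correct and takes essentially the same approach as the paper: parameterize the coordinated transfer around the cycle by a single scalar ($\delta_1$ in your notation, $\delta_0$ in the paper's), pin the intermediate nodes at zero surplus to determine the ratios $\lambda_m$, express $\Delta r_{i_1}$ as an affine function of the free parameter, and let the first binding constraint (capacity or surplus) determine which of the two conclusions holds. You additionally spell out the verification of invariants I1--I5 and the ``gainy'' case $\kappa\ge 0$, which the paper leaves implicit, but the structure of the argument is identical.
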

\begin{proof} Refer to Section \ref{sec:analysis}
\end{proof}

In a round, we use the above lemmas to perform multiple iterations of the transfer walk.

\medskip

\noindent \textbf{Transfer Walk}

\indent\textbf{Step 1}: Find a node $i_0$ with a positive surplus. If there are no such nodes, then terminate the round and jump to readjustment of allocations.

\indent\textbf{Step 2}: Follow a path going out of $i_0$ in $G$ in a depth-first-search fashion. We look at the first edge in the adjacency list of the last visited node $i$ on the path.  Let $(i,k)$ be this edge. If node $k$ is yet unvisited, we follow that edge to extend the path. If $k$ is already on the path, then we have found a cycle in $G$. Finally if $i$ has no out-edges, then we have found a sink.
Whichever the case, we now transfer surplus along the current path from $i_0$ to $i$ as in corollary \ref{lem.path}. If an edge along the path drops out, we trigger event 2d. Otherwise, we trigger events 2a-2c depending upon case. Since the path can visit at most $n$ new nodes, the transfer walk must end in a finite number of operations in one the of following events:
    \begin{itemize}
    	\item[]\textbf{Event 2a} - The path reaches a sink $i$ with $z_{\pi(i)} = 0$: Let $j = \pi(i)$. By corollary \ref{lem.path}, we must have transferred a positive surplus to $i$ even if $r_i$ was zero at the begining of the walk. Hence $i$ is an unsatisfiable node. Raise $p_j \leftarrow (1+\epsilon)p_j$. Terminate the walk and the round.
    	
    	\item[]\textbf{Event 2b} - The path reaches a sink $i$ with $z_{\pi(i)} > 0$: Let $j = \pi(i)$. By invariant I3, $p_j = \epsilon$. We let $\delta = \min(\ r_i/\epsilon,\ z_j\ )$. We then assign $h_{ij}\ \leftarrow\ h_{ij} + \delta$. If $\delta\ =\ r_i/\epsilon$ then the surplus of $i$ goes to zero otherwise $z_j$ goes to zero. In either case we end the this transfer walk.
    	
    	\item[]\textbf{Event 2c} - The path finds a cycle: Let $i$ be the last node visited on the path and an edge $(i,k)$ in $G$ reaches a node $k$ already visited on the path. By corollary \ref{lem.path}, all the nodes in the cycle except $i$ have zero surplus. Therefore, we apply lemma \ref{lem.cycle} until the surplus of $i$ becomes zero or an edge in the cycle drops out. We terminate the current walk.
    	
    	\item[]\textbf{Event 2d} - An edge drops out during path transfer: In this case we terminate the current walk.
	\end{itemize}

If a transfer walk ends in event 2a, we terminate the current round and start the next one. Otherwise if events 2b-2d are triggered, we start a new transfer walk. If the surplus of all buyers is found to be zero in Step 1, we move to the last phase, which is readjustment of allocations.

\subsubsection*{Readjustment of allocations}

At the end of the transfer walks, all the required invariants are satisfied, but the same good may be allocated to the same or different buyers at different prices: $p_j$ and $p_j/(1+\epsilon)$. Therefore in this phase, we merge the two tiers of allocation for every buyer-good pair to create the final allocations. For all $i$, $j$ such that $y_{ij} > 0$, we assign $$x_{ij}\ \leftarrow\ h_{ij}\ +\ \frac{\frac{p_j}{1+\epsilon} + c_{ij}}{p_j+c_{ij}}y_{ij}$$ The final equilibrium prices are the prices at the termination of the algorithm.

\begin{theorem}
\label{thm:correct}
The algorithm produces $\epsilon$-approximate equilibrium prices and allocations.
\end{theorem}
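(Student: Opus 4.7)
The plan is to treat this as a verification theorem: once we know the algorithm halts (which follows from the running time bound of Theorem~\ref{thm.main}, since the number of price raises is bounded), the final state satisfies invariants I1--I5 with every buyer's surplus equal to zero (this is the exit condition in Step~1 of the transfer walk). The task then is to check that the readjusted allocations $x_{ij} = h_{ij} + \frac{p_j/(1+\epsilon)+c_{ij}}{p_j+c_{ij}}\, y_{ij}$ together with the final prices $p_j$ satisfy each of (\ref{eq5})--(\ref{eq4}).

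For (\ref{eq5}), the readjustment formula is calibrated precisely so that $(p_j+c_{ij})\, x_{ij} = (p_j+c_{ij})\, h_{ij} + (p_j/(1+\epsilon)+c_{ij})\, y_{ij}$, i.e.\ the dollars charged for $x_{ij}$ at the merged price equal the dollars previously charged across the two tiers. Summing over $j$ and using zero surplus (from the termination condition) together with I1 yields $\sum_j (p_j+c_{ij}) x_{ij} = B_i$. For (\ref{eq6}), since $p_j/(1+\epsilon) \leq p_j$ the readjustment factor is $\leq 1$, so $x_{ij} \leq h_{ij}+y_{ij}$; summing and using I3 gives $\sum_i x_{ij}\leq 1$.

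The condition (\ref{eq3}) is where the $1/(1+\epsilon)$ loss appears and is the main content. If $p_j > \epsilon$ then by I3 the good was fully allocated across both tiers, so $\sum_i (h_{ij}+y_{ij})=1$. The key inequality is
\[
\frac{p_j/(1+\epsilon)+c_{ij}}{p_j+c_{ij}} \;\geq\; \frac{1}{1+\epsilon},
\]
which follows directly from $c_{ij}\geq 0$ (it is equivalent to $c_{ij}(1+\epsilon)\geq c_{ij}$). Hence $x_{ij} \geq \frac{1}{1+\epsilon}(h_{ij}+y_{ij})$, and summing over $i$ gives $\sum_i x_{ij}\geq 1/(1+\epsilon)$. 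Finally (\ref{eq4}) is just invariant I4, which was shown to be preserved in Lemma~\ref{lem.priceincrease} (and is trivially preserved by surplus transfers since those only reshuffle allocations of goods already in the relevant demand sets) and is unaffected by the readjustment since it only changes the quantities, not which $(i,j)$ pairs have $x_{ij}>0$.

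The main obstacle is conceptual rather than computational: one must verify that the specific readjustment formula simultaneously preserves the budget equality (by matching dollars) and the clearing condition up to the $(1+\epsilon)$ factor (by matching units modulo the non-negativity of $c_{ij}$). Once these two identities are isolated, the four conditions follow by short calculations using the invariants. A secondary item to note is that termination itself is not proved here but is inherited from the running-time analysis, which bounds the number of price raises and hence of rounds.
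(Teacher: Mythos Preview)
Your argument is correct and follows essentially the same route as the paper's proof: use the invariants at termination, expand $(p_j+c_{ij})x_{ij}$ via the readjustment formula to get the budget equality, bound the readjustment factor above by $1$ for feasibility and below by $1/(1+\epsilon)$ (using $c_{ij}\ge 0$) for approximate clearing, and read off approximate optimality from I4. If anything, your write-up is slightly more explicit than the paper's in isolating the inequality $\frac{p_j/(1+\epsilon)+c_{ij}}{p_j+c_{ij}}\ge \frac{1}{1+\epsilon}$ and in noting that termination is inherited from the running-time bound.
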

\begin{proof} 
By lemmas \ref{lem.priceincrease}, \ref{lem.transfer}, \ref{lem.path} and \ref{lem.cycle}, the prices and allocations at the end of the last round satisfy invariants I1-I5. Invariant I4 implies that the final allocations satisfy the approximate optimality dictated by equation \eqref{eq4}. We have for all $i$,
$$\sum_{j}{(p_j+c_{ij})x_{ij}}\ =\ \sum_{j}{(p_j+c_{ij}) \left(h_{ij} + \frac{y_{ij}}{1+\epsilon}\right)}\ =\ \sum_{j}{(p_j+c_{ij})h_{ij}}\ +\ \sum_{j}{\left(\frac{p_j}{1+\epsilon}+c_{ij}\right)y_{ij}}\ =\ B_i$$
which proves equation \eqref{eq5}. Equation \eqref{eq6} follows by definitions of $x_{ij}$'s. Finally, we need to show that equation \eqref{eq3} holds. First observe that $p_j > \epsilon$ implies that the price of good $j$ was raised in some round, which implies $z_j = 0$ in that round. Invariant 3 then implies that $z_j = 0$ at the end of the last round. From definition of $x_{ij}$, $x_{ij}\ \geq\ h_{ij} + y_{ij}/(1+\epsilon)\ \geq\ (h_{ij} + y_{ij})/(1+\epsilon)$ for all $i$ and $j$. Summing over $i$, this gives equation \eqref{eq3}.
\end{proof}

\section{Analysis}
\label{sec:analysis}
\subsection*{Correctness of the Algorithm}
We will now prove the lemmas used in the preceding section.
\medskip

\noindent{\bf Proof of Lemma \ref{lem.transfer}:}
We choose the amount of good $j = \pi(i)$ transferred from $k$ to $i$ as 
$\delta = \max\left(\frac{r_i}{p_j + c_{ij}},\ y_{kj}\right).$ 
Clearly, if $\delta = y_{kj}$ then the lower price allocation of $j$ to $k$ is exhausted and the edge $(i,k)$ drops out. Otherwise, buyer $i$ spends all her surplus on the new allocation.
\medskip

\noindent{\bf Proof of Corollary \ref{lem.path}:}
Let $i = v_0, v_1, ...., v_l  = k$ be the path. Then use lemma \ref{lem.transfer} to transfer surplus from $v_q$ to $v_{q+1}$ for $q$ = $0$ to $l-1$ in that order. If no edge in the path drops out of $G$, then by the lemma, the surplus of all the nodes on the path except $v_l = k$ goes to zero.
\medskip

\noindent{\bf Proof of Lemma \ref{lem.cycle}:}
Let $v_0, v_1, ...., v_l$ be the cycle in $G$ with $v_0 = v_l$ and w.l.o.g. $v_0$ being the node with positive surplus. We will transfer a quantity $\delta_q$ of good $\pi(v_q)$ from the lower price allocation of $v_{q+1}$ to the higher price allocation of $v_{q}$. We will adjust the $\delta$ values carefully so as to maintain zero surplus at all nodes except $v_0$.

Given $\delta_0$, the above requirement fixes all other $\delta$ values. To maintain zero surpluses at all other nodes, we need to set: $$\delta_{q+1}\ =\ \frac{1}{p_{\pi(v_{q+1})} + c_{v_{q+1}\pi(v_{q+1})}}\cdot\left(\frac{p_{\pi(v_q)}}{1+\epsilon} + c_{v_{q+1}\pi(v_{q})}\right)\cdot\delta_q$$

Note that this process changes the surplus of $v_0$ as $$r_{v_0}\ \ \leftarrow\ \ r_{v_0}\ +\ \left(\frac{p_{\pi(v_{l-1})}}{1+\epsilon} + c_{v_{0}\pi(v_{l-1})}\right)\delta_{l-1}\ -\ (p_{\pi(v_{0})} + c_{v_{0}\pi(v_{0})})\delta_{0}\ \ =\ \ r_{v_0} + C\delta_0$$ where $C$ is a constant determined by current prices. If $C$ is negative, then $r_{v_0}$ reduces and let $\delta^*$ be the value of $\delta_0$ at which is reaches zero. Otherwise, let $\delta^* = \infty$.

We set $\delta_0$ to be the maximum value such that $\delta_0 \leq \delta^*$ and $\delta_q \leq y_{v_{q+1}\pi(v_q)}$ for all $0 \leq q < l$. We perform the surplus transfer with these $\delta$ values. If $\delta_0 = \delta^*$ then the surplus of $v_0$ reduces to zero and hence all buyers have zero surpluses. Otherwise, there exists $q$ such that $\delta_q = y_{v_{q+1}\pi(v_q)}$ and hence the edge $(v_q, v_{q+1})$ drops out of the graph.

\subsection*{Running Time of the Algorithm}

The major chunk of the computation in our algorithm happens inside the transfer walks. Hence we count the number of transfer walks that we perform, categorized by the event that ends the walk.

\begin{lemma}
\label{lem:edgedrop}
If $R$ is the number of rounds in the algorithm, then the number of transfer walks that end in an edge dropping out of $G$ is at most $nR$.
\end{lemma}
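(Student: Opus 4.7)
My proof plan is an amortized argument based on the potential
$\Phi = \#\{(k,j) : y_{kj} > 0\}$,
i.e., the number of buyer-good pairs currently holding a positive lower-tier allocation. I will argue that each edge-drop walk decreases $\Phi$ by at least one, that $\Phi$ never increases during a round, and that $\Phi$ increases by at most $n$ at the boundary between consecutive rounds. Since $\Phi = 0$ initially and is always nonnegative, the total number of decreases is bounded by the total number of increases, which will turn out to be at most $nR$.

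The first two ingredients are straightforward. During a round every modification of the $y$-variables comes from the surplus transfers inside Lemma \ref{lem.transfer} and Lemma \ref{lem.cycle}, both of which only subtract from $y$-entries; hence $\Phi$ is non-increasing throughout a round. For the boundary, a round ends (Event 2a) by raising the price of the good $j^\star = \pi(i)$ for some unsatisfiable sink $i$. Because $i$ is a sink in $G$, the definition of the edge set of $G$ forces $y_{k j^\star} = 0$ for every buyer $k$; consequently the price raise simply converts the current-price allocations $h_{k j^\star}$ into lower-tier allocations $y_{k j^\star}$, adding at most one new positive $y$-entry per buyer and leaving every other $y$-entry untouched. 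So $\Phi$ increases by at most $n$ at each round boundary and by at most $nR$ in total.

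For the third ingredient, consider any walk that terminates because an edge drops out. In Event 2d the termination is caused by a single invocation of Lemma \ref{lem.transfer} in which $\delta$ equals the full residual $y_{k\pi(i)}$, setting that entry to zero; in Event 2c terminated by an edge drop, the same thing happens inside Lemma \ref{lem.cycle}, whose construction picks $\delta_0$ maximal so that some $\delta_q = y_{v_{q+1}\pi(v_q)}$ drives that entry to zero. In either case at least one $(k,j)$ pair transitions from $y_{kj} > 0$ to $y_{kj} = 0$, so $\Phi$ decreases by at least one per such walk. (Note that a single $y_{kj}\to 0$ event may cause several edges of $G$ to drop simultaneously, but this only strengthens the bound.)

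Putting the three pieces together, the total number of edge-drop walks is at most the cumulative increase in $\Phi$, which is at most $nR$. The main delicate point is the claim that $y_{k j^\star}=0$ for all $k$ just before the price raise; this is what links the graph-theoretic sink property used in the algorithm to the support structure of the lower-tier allocations, and is the observation that makes the potential argument go through cleanly.
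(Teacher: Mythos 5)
Your proof is correct and takes essentially the same route as the paper's: your potential $\Phi$ is exactly the number of $J$-to-$I$ edges in the auxiliary bipartite graph $H$ that the paper introduces, and your three ingredients (non-increase of $\Phi$ within a round, increase by at most $n$ at a round boundary, decrease by at least one per edge-drop walk) mirror the paper's accounting of edge additions and removals in $H$ one-for-one. Your explicit observation that $y_{kj^\star}=0$ for every buyer $k$ just before a price raise, forced by the round-ending node being a sink, is left implicit in the paper's proof but is a useful point to make precise since it is what keeps invariant I5 intact and caps the per-round increase at $n$.
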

\begin{proof}
There are two ways in which edges can be added to $G$ after a price increase.
\begin{enumerate}
\item After a price increase, the function $\pi(i)$ can change and hence the edges going out of $i$ can change.
\item When the price of a good $j$ is raised at the end of the round, an edge $(i, k)$ appears in $G$ for each $i$ such that $j \in D_i$ and for each $k$ which has any allocation of $j$.
\end{enumerate}
This addition of potentially $\Omega(n^2)$ edges implies a weak upper bound of $n^2R$ on the number of such transfer walks in the algorithm. In what follows we define another graph which mirrors $G$ in semantics, but contains a lot fewer edges.

Let $H$ be a directed bipartite graph between the set of buyers $I$ and the set of goods $J$. For $i \in I$ and $j \in J$, an edge $(i, j) \in H$ if and only if $j = \pi(i)$. Similarly, $(j,i) \in H$ if and only if $y_{ij} > 0$. Clearly, the edge $(i, k) \in G$ if and only if $k$ can be reached from $i$ in $H$ by a path of length two. Also note that the number of edges going from $I$ to $J$ is exactly $n$ at any point. An edge $(i,k)$ drops out of $G$ if and only if the edge $(\pi(i), k)$ drops out of $H$. But only $n$ edges are possibly added from $J$ to $I$ after a price increase. Since we start with zero edges going from $J$ to $I$, an edge can drop out of $H$  at most $nR$ times throughout the algorithm, and hence the same bound applies to $G$.
\end{proof}

\medskip

\noindent\textbf{Proof of Theorem \ref{thm.main}}:

\textbf{Initialization and readjustment}: Both the initialization and final adjustment of allocations can be performed in $mn$ operations.

\textbf{The number of rounds}: The price of exactly one good is raised by multiplicative factor of $1+\epsilon$ in each round except the last round. Starting at $\epsilon$, the maximum value to which a price may be raised is $B = (1+\epsilon)\sum_{i}B_i$. Therefore, there can be at most $R = 1+\frac{m}{\epsilon}\log(\frac{B}{\epsilon})$.

\textbf{Constructing the graph}: Notice that although the demand set of a buyer may contain all the $m$ goods, we only need one of them at any point. For each buyer, we maintain all the goods in a balanced tree data structure that sorts the goods first by the bang-per-buck $u_{ij}/(p_j + c_{ij})$ and then by the index $j$. In this manner, we can compute the function $\pi(i)$ in $O(\log{m})$ time. Given $\pi(i)$, every node may have an edge to every other node. Therefore, the graph $G$ can be constructed in $O(n^2 + n\log{m})$ operations. After the price increase at the end of the round, the sorted trees can be maintained in time $O(n\log{m})$ while the transfer of allocations from higher to lower price tier can be completed in $O(n)$ operations.

\textbf{Number of transfer walks}: All the remaining computation in the algorithm takes place within the transfer walks. We will perform an amortized analysis on the number of transfer walks that take place throughout the algorithm. Notice that since we follow the first edge going out of each vertex, the depth-first-search requires only $O(n)$ operations. The surplus transfer along a path and a cycle can similarly be performed in $O(n)$ operations. When an edge drops out, updating $G$ involves simply incrementing a pointer. Therefore, overall a transfer walk requires $O(n)$ operations.

We will now bound the number of transfer walks that happen throughout the algorithm, including all the rounds. We will classify them by the event that ends the walk. At most $R$ transfer walks can terminate the round. At most $m$ walks can end with $z_j$ going zero. Lemma \ref{lem:edgedrop} bounds the number of walks that end with an edge dropping out of the graph. The only remaining case is that the walk ends when the surplus of the last visited node on the path vanishes. A transfer walk ending in this case leaves one less node in $G$ with a positive surplus. To see this, observe that a transfer walk starts with a node on the same path with positive surplus and by the time it ends in this case, all the nodes on the path have zero surplus by corollary \ref{lem.path} and lemma \ref{lem.cycle}.

Let $r_+$ be the number of nodes in $G$ with positive surplus at any point in the algorithm. After initialization we have $r_+ = n$ and the only event which may increase $r_+$ is event 2d. If an edge $(i,k)$ drops out during surplus transfer along the path, node $k$ may be left with some positive surplus that was absent at the start of the walk. Therefore $r_+$ increases by at most one in this event. Combined with lemma \ref{lem:edgedrop}, this implies a bound of $n+nR$ on the number of times $r_+$ reduces.

It is clear from the above analysis that the algorithm performs at most $O(nR)$ transfer walks. Combined with the other computation bounds, this yields an upper bound of $O\left(\frac{1}{\epsilon}(n+\log{m})mn\log(B/\epsilon)\right)$ on the running time of the algorithm.

\bibliography{mewtc}
\bibliographystyle{plain}

\medskip

\medskip

\medskip

\appendix
\section{Formulation as a Convex Program}
\label{convex}

\noindent\textbf{Convex Program Formulation}: Consider the following convex program,
\begin{eqnarray}
\mathrm{minimize} && \sum_j p_j -\sum _i B_i \log \beta_i \label{eqn.cp}\\
\forall i,j & ~~~~~~ &  p_j + \cij\ \geq\ \uij \beta_i\nonumber\\
\forall i,~~\beta_i\ \geq\ 0 & & \forall j,~~p_j\ \geq\ 0\nonumber
\end{eqnarray}
\begin{theorem}
A vector $(\p, \bf\beta)$ is an equilibrium of $\market$ if and only if it minimizes Convex program \eqref{eqn.cp}.
As a corollary, a Fisher market with linear utilities and transaction costs has a unique set of equilibrium prices.
\end{theorem}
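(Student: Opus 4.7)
The plan is to apply KKT duality to convex program \eqref{eqn.cp}. The objective is linear in $\p$ plus the strictly convex term $-B_i \log \beta_i$ on the domain $\beta_i > 0$, and all constraints are linear, so the program is convex and Slater's condition is immediate (take $\beta_i$ small and $p_j=0$). Hence the KKT conditions are both necessary and sufficient for optimality, and the task reduces to exhibiting a bijection between KKT points and equilibrium pairs $(\p,\x)$, under the identification $\beta_i = 1/\alpha_i$.

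Attach multipliers $x_{ij}\geq 0$ to the constraints $p_j + c_{ij} \geq u_{ij}\beta_i$, and $\mu_j,\nu_i\geq 0$ to the sign constraints. Stationarity in $p_j$ yields $\sum_i x_{ij} + \mu_j = 1$, and stationarity in $\beta_i$ yields $\sum_j u_{ij} x_{ij} = B_i/\beta_i + \nu_i$. Since $\log\beta_i\to -\infty$ as $\beta_i\to 0$, any optimum has $\beta_i>0$, forcing $\nu_i=0$. Complementary slackness then reads: $\mu_j p_j = 0$, which gives $\sum_i x_{ij}\leq 1$ with equality whenever $p_j>0$; and $x_{ij}>0\Rightarrow u_{ij}\beta_i = p_j+c_{ij}$, meaning $j\in D_i$ with bang-per-buck $\alpha_i = 1/\beta_i$. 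Multiplying the $\beta_i$-stationarity by $\beta_i$ and using this tightness on the support of $x_i$ yields $\sum_j x_{ij}(p_j+c_{ij}) = B_i$. Reading $x_{ij}$ as the allocation of good $j$ to buyer $i$, these are precisely the equilibrium conditions \eqref{eq7}--\eqref{eq2}. Conversely, from an equilibrium $(\p,\x)$ I would set $\beta_i = 1/\alpha_i$, reuse $x_{ij}$ as the multipliers, and put $\mu_j = 1 - \sum_i x_{ij}$, $\nu_i = 0$; each KKT condition then follows directly from the corresponding equilibrium condition.

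For the uniqueness corollary, strict convexity of $-B_i\log\beta_i$ forces the vector of $\beta_i$'s at the optimum to be unique, and given this vector, each optimal $p_j$ is pinned down as $p_j = \max\{0,\max_i(u_{ij}\beta_i - c_{ij})\}$ because the objective strictly increases in $p_j$ subject to that lower envelope. Hence equilibrium prices are unique. The most delicate step is the budget equation: it does not appear verbatim in the KKT list and must be extracted by combining $\beta_i$-stationarity with complementary slackness on every term of the sum---this is the piece I would write out in full before declaring the equivalence. The rest is a straightforward two-way translation between primal-dual KKT data and equilibrium data.
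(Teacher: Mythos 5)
Your proposal follows essentially the same approach as the paper: both reduce the equivalence to the KKT conditions of \eqref{eqn.cp}, identify the multipliers $x_{ij}$ with allocations and $\beta_i$ with $1/\alpha_i$, extract the budget equation by multiplying $\beta_i$-stationarity by $\beta_i$ and using complementary slackness on the support of $x_i$, and derive uniqueness from the strict concavity of $\log$ in $\beta$ plus the observation that $\beta$ pins down $\p$ via $p_j=\max\{0,\max_i(u_{ij}\beta_i-c_{ij})\}$ (the paper omits the outer $\max$ with $0$, which you correctly include). You also spell out the stationarity and slack multipliers more explicitly than the paper and note Slater's condition; both are fine, though since all constraints in \eqref{eqn.cp} are affine, strong duality holds from feasibility alone without a strict-interior point, which is cleaner than the $p_j=0$ witness (which can fail strictness when $c_{ij}=0$).
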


\begin{proof}

Note that the KKT conditions that guarantee optimality of a feasible
solution are as follows,
with $\xij$ being the Lagrangian multiplier for the inequality $p_j +
\cij \geq \uij \beta_i$.
There exists, for all $i,j,$  $\xij$ such that
\begin{eqnarray}
&& \forall~i,j: \ \xij > 0 \Rightarrow p_j + \cij = \uij \beta_i\label{eqn.cs1}\\
&& \forall~i:\ \ \sum_j \uij \xij = B_i /\beta_i.\label{eqn.cs3}\\
&& \forall~j:\ \ \sum_i\xij \leq  1 ~~\text{ and equality holds if} ~~
p_j >0.\label{eqn.cs2}
\end{eqnarray}

The equivalence of the KKT conditions and market equilibrium conditions (equations \eqref{eq7}-\eqref{eq2}) follows
from interpreting $\xij$
as the allocation and $\beta_i$ as $1/\alpha_i$.
Conditions (\ref{eqn.cs1}) and (\ref{eqn.cs3})
together imply \eqref{eq7}, that
all buyers exhaust their budget, i.e. for all $i$,
\[ \sum_j (p_j + \cij) \xij\ =\ \sum_{j:\ x_{ij}>0} \uij \beta_i \xij\ =\ B_i.\]

Conversely, equations \eqref{eq7} and \eqref{eq2} together imply \eqref{eqn.cs3}. $$\sum_{j}{u_{ij}x_{ij}}\ =\ \sum_{j:\ x_{ij}>0}{\frac{(p_j+c_{ij})x_{ij}}{\beta_i}}\ =\ \frac{B_i}{\beta_i}$$

Conditions \eqref{eq8} and \eqref{eq1}  are  the same as (\ref{eqn.cs2}).
Condition \eqref{eq2} is  the same as (\ref{eqn.cs1}).
This proves existence of equilibrium prices. 

Uniqueness follows from the fact that the objective function in
\eqref{eqn.cp} is strictly convex. Let $(\p, \beta)$ and $(\overline{\p}, \overline{\beta})$ be two distinct optimal solutions to the convex program. We claim that $\beta \neq \overline{\beta}$. For contradiction, assume $\beta = \overline{\beta}$. But observe that given any $\beta$, we can determine a \emph{unique} $\p$ which minimizes the objective function by setting $p_j = \max_i{(u_{ij}\beta_i - c_{ij})}$. Therefore, if $\beta = \overline{\beta}$ then $\p = \overline{\p}$. Since $(\p, \beta)$ and $(\overline{\p}, \overline{\beta})$ are distinct, we conclude that $\beta \neq \overline{\beta}$.

Now consider a linear combination $(\hat{\p}, \hat{\beta}) = \alpha(\p, \beta) + (1-\alpha)(\overline{\p}, \overline{\beta})$ for any $0 < \alpha < 1$. Clearly, $\sum_j{\hat{p}_j} = \sum_j{\left(\alpha p_j + (1-\alpha)\overline{p}_j\right)}$, but $\sum_i{B_i\log{\hat{\beta}_i}} > \sum_i{B_i\left(\alpha\log{ \beta_i} + (1-\alpha)\log{\overline{\beta}_i}\right)}$, since $\log$ is a strictly concave function. Therefore, $$\sum_j{\hat{p}_j} - \sum_i{B_i\log{\hat{\beta}_i}}\ <\ \alpha\left(\sum_j{p_j} - \sum_i{B_i\log{\beta_i}}\right) + (1-\alpha)\left(\sum_j{\overline{p}_j} - \sum_i{B_i\log{\overline{\beta}_i}}\right)$$

This contradicts the fact that $(\p, \beta)$ and $(\overline{\p}, \overline{\beta})$ were both optimal. Hence, the objective function is strictly convex, and the equilibrium is unique.
\end{proof}

\section{A market with irrational equilibrium prices}
\label{app1}
Consider a market with two buyers, $i$ and $k$ and two goods $j$, $j'$.

\medskip

\begin{tabular}{| c l  c | c c c | c c c |}
\hline
& & & & Buyer $i$ & & & Buyer $k$ & \\
\hline
& Budgets & & & 1 & & & 1 & \\
\hline
& Utilities & & & $u_{ij} = 1000$, $u_{ij'} = 1$ & & & $u_{kj} = u_{kj'} = 1$ & \\
\hline
& Transaction costs & & &  $c_{ij} = 1$, $c_{ij'} = 1000$ & & & $c_{kj} = c_{kj'} = 0$ & \\
\hline
\end{tabular}

\medskip

It can be verified that the $p_j = p_{j'} = \frac{1}{\sqrt{2}}$ are equilibrium prices. At these prices, buyer $i$ only demands good $j$ whereas buyer $i'$ demands both goods. Buyer $i$ spends her one dollar on $\frac{\sqrt{2}}{\sqrt{2} + 1}$ units of good $j$ at effective price $1 + \frac{1}{\sqrt{2}}$. Buyer $i'$ buys the remaining $\frac{1}{\sqrt{2}+1}$ units of good $j$ and the entire one unit of good $j'$, both at effective price $\frac{1}{\sqrt{2}}$.

\section{Existence of cycles in the demand graph}
\label{app3}

We provide a sketch of how a cycle can exist in the demand graph. Consider two buyers $i$ and $k$ and two goods $j$ and $j'$ in a market with many other buyers and goods. Their utilities and transaction costs are tabulated below. Assume both $i$ and $j$ have sufficiently large budgets and consider an instance in the auction algorithm when $p_j = p_{j'} = 1$.

\medskip

\begin{tabular}{| c l  c | c c c | c c c |}
\hline
& & & & Buyer $i$ & & & Buyer $k$ & \\
\hline
& Utilities & & & $u_{ij} = 1+\frac{\epsilon}{3}$, $u_{ij'} = 2$ & & & $u_{kj} = 2$, $u_{kj'} = 1+\frac{\epsilon}{3}$ & \\
\hline
& Transaction costs & & &  $c_{ij} = 0$, $c_{ij'} = 1$ & & & $c_{kj} = 1$, $c_{kj'} = 0$ & \\
\hline
\end{tabular}
\ \ where $\epsilon < 1$.

\medskip

Clearly, good $j$ will be allocated to buyer $i$ and good $j'$ to buyer $k$. Now if the price of both the goods is raised due to demand by other buyers, $j'$ appears in the demand set of $i$ and $j$ appears in the demand set of $k$. Hence, the graph $G$ may contain a cycle on the two nodes $i$ and $k$ after the prices are raised.

Note: The fast version of the auction algorithm in \cite{GK04} employs certain tie-breaking techniques that suffice to ensure acyclicity in the absence of transaction costs. The above example holds even if those techniques are employed in our model. Therefore, we have chosen to simplify our algorithm by not using these tie-breaking rules.

In absence of transaction costs the following property holds, which is used in \cite{GK04}: If buyer $i$ prefers good $j$ over good $j'$ at prices $p_j$ and $p_{j'}$, then she maintains the same preference when the prices are each raised to $p_j(1+\epsilon)$ and $p_{j'}(1+\epsilon)$. From the above example, such an assertion is false in our model.

\end{document}